\documentclass[conference]{IEEEtran}
\IEEEoverridecommandlockouts

\usepackage{cite}
\usepackage{amsmath,amssymb,amsfonts}
\usepackage{algorithmic}
\usepackage{graphicx}
\usepackage{textcomp}
\usepackage{xcolor}
\usepackage{amsthm}
\newtheoremstyle{asccstyle}{3pt}{3pt}{\normalfont}{\parindent}{\itshape}{.}{0.5em}{}
\theoremstyle{asccstyle}
\newtheorem{theorem}{Theorem}
\newtheorem{lemma}{Lemma}
\newtheorem{definition}{Definition}
\newtheorem{remark}{Remark}
\def\BibTeX{{\rm B\kern-.05em{\sc i\kern-.025em b}\kern-.08em
    T\kern-.1667em\lower.7ex\hbox{E}\kern-.125emX}}
\begin{document}

\title{Data-Driven Distributed Stability Certification for Power Systems via Input-State Trajectories}

\author{
    \IEEEauthorblockN{Xiaohui Zhang, Liaoyuan Yang, Peng Yang, \IEEEmembership{Member, IEEE}}
    \thanks{This work was supported by the National Natural Science Foundation of China (No. 52407137) and the Young Talent Fund of Association for Science and Technology in Shaanxi, China (No. 20250431). (Corresponding author: Peng Yang)}
    }

\maketitle

\begin{abstract}
This article proposes a data-driven framework to verify the distributed conditions that guarantee the system-wide stability for interconnected power systems. To guarantee system-wide stability, the dynamics of each bus are required to satisfy an output differential passivity (ODP) condition with a sufficient index. These ODP indices uniformly quantify the impacts on the system-wide stability of individual bus dynamics and the coupling strength from the power network. To obtain these indices without explicit physical models, we derive a data-driven linear matrix inequality (LMI) criterion based exclusively on measured input-state trajectories. Furthermore, extracting the optimal ODP index is formulated as a convex semi-definite programming (SDP) problem. Simulations verify the effectiveness of the proposed method under both single-device offline evaluation and system-wide online certification scenarios.
\end{abstract}

\begin{IEEEkeywords}
Power system stability, distributed stability
analysis, data driven, input-state, output differential passivity.
\end{IEEEkeywords}

\section{Introduction}
\label{sec:introduction}

Modern power systems are undergoing a profound transition with the continuous advancement of renewable energy technologies \cite{hatziargyriou2020stability}. Specifically, the proliferation of massive distributed energy resources is shifting the traditional centralized and homogeneous generation paradigm towards a decentralized and highly heterogeneous mode, introducing complex and nonlinear dynamic behaviors into the grid \cite{guerrero2012advanced}. Such a rapid expansion in network dimensionality, coupled with the deep heterogeneity of underlying dynamic features, poses fundamental challenges to traditional system-wide stability analysis \cite{milano2018foundations}. Traditional analysis paradigms typically treat the entire network as a whole, relying on the formulation of high-dimensional differential-algebraic equations that cover the entire grid. These methods employ demanding time-domain numerical simulations \cite{stott2005power}, eigenvalue calculations based on the system-wide state matrix \cite{sastry2003hierarchical}, or efforts to construct a global energy function for the system \cite{chang1995direct}. However, in such complex and heterogeneous power systems, these traditional centralized methods often become intractable or even fail to solve effectively due to severe computational burdens, potential communication failures, and increasingly strict data privacy concerns \cite{liu2022stability},\cite{yang2019toward}. Consequently, there is an urgent need to develop stability analytics capable of accommodating these heterogeneous and nonlinear bus dynamics.

In this context, distributed stability analytics have become a more promising solution. So far, distributed stability conditions have been established mainly by exploiting known system structures or network decomposition techniques, such as reconstructing local Jacobian matrices \cite{song2017distributed}, utilizing the passivity of interconnected systems \cite{yang2019distributed}, and deriving distributed parametric conditions based on subsystem models \cite{nandanoori2020distributed}. However, these conventional methods exhibit inherent limitations in practical deployment: they are fundamentally model-based and strictly rely on explicit differential-algebraic equations. In modern power grids, obtaining such precise analytical models is often highly challenging. On the one hand, due to commercial confidentiality, inverters frequently behave as "black boxes" with unknown internal control mechanisms \cite{men2023artificial}. On the other hand, unmodeled dynamics introduced by complex active loads further degrade the accuracy of theoretical modeling \cite{ruiz2019data}. To overcome this model dependency, data-driven frameworks leveraging high-fidelity measurement trajectories offer an effective alternative.

Rooted in behavioral systems theory, Willems' fundamental lemma shows that under persistently exciting conditions, a system's trajectories can be exactly characterized by a Hankel matrix constructed from a single historical input-state trajectory \cite{willems2005note}. This theory lays a solid mathematical foundation for bypassing explicit state-space equations and synthesizing analytical conditions directly from measurement data \cite{de2019formulas}. Following this theoretical line, pioneering frameworks \cite{koch2020verifying, 9551767} have successfully achieved rigorous verification of standard passivity for unknown systems using input-state measurement trajectories. However, standard passivity often exhibits inherent conservatism when dealing with complex nonlinear dynamics, making it difficult to accurately characterize transient incremental behaviors under small disturbances.

Meanwhile, data-driven methods have also made significant progress in power systems, being progressively applied to scenarios such as model-free adaptive control of interlinked microgrids \cite{zhang2015data} and coordinated optimization of inverters \cite{xu2021data}, among others. However, existing studies predominantly focus on localized controller design or steady-state optimization. For interconnected systems comprising massive heterogeneous devices, utilizing data-driven methods to achieve rigorous system-wide distributed small-signal stability certification remains a critical open challenge.

To address this challenge and overcome the conservatism of passivity, this paper proposes a data-driven distributed output differential passivity (ODP) evaluation and stability certification framework for interconnected power systems. The main contributions are summarized as follows:

\begin{itemize}
    \item Utilizing input-state trajectory data, the ODP property of nonlinear buses is directly cast into a linear matrix inequality (LMI) for solution, without the need for explicit internal dynamic models of the devices.
    \item Combining the network topology information of the interconnected system with the data-driven ODP indices derived from individual buses, a system-wide distributed small-signal stability criterion is established.
\end{itemize}

The effectiveness of the proposed method is verified through simulations on a single-device offline case and a 3-bus system online case.

\section{PRELIMINARIES AND FORMULATIONS}

This section briefly reviews the fundamental concepts of continuous-time ODP and introduces the distributed stability criterion for interconnected power systems.

\subsection{Output Differential Passivity}

The concept of ODP, introduced in \cite{yang2019distributed}, is established upon the continuous-time nonlinear system described by:
\begin{equation}\label{eq:nonlinear-bus}
\begin{cases}
\dot{x} = f(x, u) \\
y = h(x)
\end{cases},
\end{equation}
where $x \in \mathbb{R}^n$, $u \in \mathbb{R}^m$ and $y \in \mathbb{R}^m$. For the simplicity of notations, we assume the zero input-state-output equilibrium triplet $(u^*, x^*, y^*) = (0, 0, 0)$. Note, however, all the following definitions are still valid for nonzero input-state-output triplet $(u^*, x^*, y^*)$ by simply replacing $u$, $x$, and $y$ by $u-u^*$, $x-x^*$, and $y-y^*$.

\begin{definition}[ODP\cite{yang2019distributed}]
System (1) is said to be output-differential passive (OD-passive) if there exists a continuously differentiable function $S(x)$, called \textit{storage function}, such that $x = 0$ is a local minimum of $S(x)$ and
\begin{equation}
    \dot{S} \le u^\top \dot{y} - \sigma y^\top \dot{y} = (u - \sigma y)^\top \dot{y},
\end{equation}
for some scalar $\sigma \in \mathbb{R}$, denoted as ODP($\sigma$).
\end{definition}

The scalar $\sigma$ serves to quantify the excess or shortage of the system's output differential passivity, which is formally termed the ODP index. Specifically, a positive scalar $\sigma > 0$ signifies a surplus of ODP, whereas a negative scalar $\sigma < 0$ indicates an ODP deficit.

\subsection{Interconnected Power Systems and Distributed Stability}

Consider a continuous-time interconnected power system comprising $N$ heterogeneous buses, whose system model is detailed in \cite{yang2019distributed}. For each bus $i$, the input vector is defined as the power injection $u_i := [-P_i, -Q_i/V_i]^\top$ and the output vector as the bus voltage $y_i := [\theta_i, V_i]^\top$. 

To establish the analytical framework for system-wide stability verification, the fundamental bus-level physical properties and associated network-level criterion derived therein are introduced as follows:

\begin{lemma}[Distributed Stability Criterion \cite{yang2019distributed}]
The equilibrium $x^*$ of the continuous-time interconnected power system is Lyapunov stable if, for each bus $i \in \{1, 2, \dots, N\}$, the local dynamics satisfies the ODP($\sigma_i$) property with respect to the incremental input $\tilde{u}_i := u_i - u_i^*$ and output $\tilde{y}_i := y_i - y_i^*$, and all local ODP indices satisfy:
\begin{equation}
    \sigma_i > \sigma_{net},
\end{equation}
where the network-dependent threshold $\sigma_{net} \in \mathbb{R}$ is determined by the smallest eigenvalue associated with the network's energy function $W_B(y^*)$ and the loss function $\phi(y^*)$ at the equilibrium $y^*$, given by \cite{yang2019distributed}:
\begin{equation}
    \sigma_{net} := -\lambda_{\min} \bigg( \nabla^2 W_B(y^*) + \frac{1}{2} ( \nabla \phi(y^*) + \nabla \phi(y^*)^\top ) \bigg).
\end{equation}

Here, $\nabla^2 W_B(y^*)$ is the Hessian matrix corresponding to the conservative lossless power network, while $\nabla \phi(y^*)$ is the gradient of the loss function associated with line conductances. 

In inequality (3), the index $\sigma_i$ exclusively quantifies the output differential passivity margin provided by a local device, essentially reflecting its internal dynamic damping capability. Meanwhile, $\sigma_{net}$ represents the macroscopic coupling requirement of the grid. 

Physically, the condition (3) establishes a rigorous analytical decoupling between the local equipment and the macroscopic global grid. It implies that the system-wide stability can be ensured in a distributed manner, provided that the internal damping of each local device $\sigma_i$ is sufficient to compensate for the network's coupling effect $\sigma_{net}$.

\end{lemma}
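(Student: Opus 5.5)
The plan is to build a system-wide Lyapunov function by adding the local ODP storage functions to a network energy term, in the spirit of \cite{yang2019distributed}.

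\textbf{Setup.} Write the network equations as the algebraic power-flow relation between the bus inputs and outputs. With $u=[u_1^\top,\dots,u_N^\top]^\top$ and $y=[y_1^\top,\dots,y_N^\top]^\top$, the injections satisfy
\begin{equation*}
-u = \nabla W_B(y) + \phi(y),
\end{equation*}
up to the sign and scaling conventions already built into $u_i=[-P_i,-Q_i/V_i]^\top$. Here $W_B$ collects the lossless (susceptance) part, which is a gradient field because $Q_i/V_i$ is used. The loss term $\phi$ collects the conductance part.

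\textbf{Candidate Lyapunov function.} Take
\begin{equation*}
\mathcal{V}(x,y) = \sum_i S_i(x_i) + W_B(y) - W_B(y^*) - \nabla W_B(y^*)^\top \tilde y - \tfrac{\sigma_{net}'}{2}\|\tilde y\|^2 ,
\end{equation*}
with the constant $\sigma_{net}'$ chosen below.

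\textbf{Step 1: bound $\dot{\mathcal V}$.} Summing the local inequalities (2), with $\sigma_i \ge \sigma_{\min} := \min_i \sigma_i$, gives
\begin{equation*}
\sum_i \dot S_i \le \tilde u^\top \dot y - \sum_i \sigma_i \tilde y_i^\top \dot y_i .
\end{equation*}
Substitute $\tilde u = -(\nabla W_B(y) - \nabla W_B(y^*)) - (\phi(y) - \phi(y^*))$. The term $-(\nabla W_B(y) - \nabla W_B(y^*))^\top \dot y$ is exactly cancelled by the time derivative of the shifted $W_B$ part of $\mathcal V$. The remaining loss term is $-(\phi(y)-\phi(y^*))^\top \dot y$.

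\textbf{Step 2: handle the loss term.} This is the main obstacle, because it is not a total derivative. I would linearize around $y^*$: $\phi(y)-\phi(y^*) = \nabla\phi(y^*)\tilde y + o(\|\tilde y\|)$. Only the symmetric part of $\nabla\phi(y^*)$ gives a quadratic form $\tfrac12 \tilde y^\top(\nabla\phi+\nabla\phi^\top)\tilde y$ whose derivative matches. The skew part produces a term $\tilde y^\top K \dot y$ with $K=-K^\top$, which must be shown to be harmless. If the loss function in \cite{yang2019distributed} is constructed so that $\nabla\phi$ is effectively symmetric, or its skew part is absorbed into $W_B$, this is immediate. Otherwise I would instead treat the skew part through the nonpositivity of dissipated power and argue directly in that reference's structure. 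I would rely on the cited modeling for this point.

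\textbf{Step 3: positive definiteness.} Choose the sign-indefinite quadratic correction so that the $\sigma_i$ terms in Step 1 become $\tfrac{d}{dt}$ of $-\sum_i \tfrac{\sigma_i}{2}\|\tilde y_i\|^2$. Then $\dot{\mathcal V}\le 0$ locally for
\begin{equation*}
\mathcal V = \sum_i S_i + \big[\text{second-order part } \tfrac12 \tilde y^\top H \tilde y\big] + \sum_i \tfrac{\sigma_i}{2}\|\tilde y_i\|^2 + o(\|\tilde y\|^2),
\end{equation*}
where $H=\nabla^2 W_B(y^*)+\tfrac12(\nabla\phi+\nabla\phi^\top)$. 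The hypothesis $\sigma_i>\sigma_{net}=-\lambda_{\min}(H)$ gives
\begin{equation*}
H+\mathrm{diag}(\sigma_i I)\succeq H+\sigma_{\min} I \succ 0 .
\end{equation*}
So the network-plus-correction part is strictly positive definite in $\tilde y$ near $y^*$. Together with each $S_i$ having a local minimum at $x_i^*$, this makes $x^*$ a local minimum of $\mathcal V$, and $\mathcal V$ is locally positive definite once the usual observability-type strictness used in \cite{yang2019distributed} holds. With $\dot{\mathcal V}\le 0$, Lyapunov's direct method then yields stability of $x^*$.
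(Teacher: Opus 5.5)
The paper gives no proof of Lemma 1; it is imported from \cite{yang2019distributed}, so your argument has to stand on its own. For the lossless part it does. The Bregman-shifted $W_B$ cancels $-(\nabla W_B(y)-\nabla W_B(y^*))^\top\dot y$, the terms $\tfrac{\sigma_i}{2}\|\tilde y_i\|^2$ absorb the ODP correction, and $\nabla^2W_B(y^*)+\mathrm{diag}(\sigma_i I)\succ 0$ under the hypothesis.

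The genuine gap is Step 2, which is exactly where the lossy content of the lemma sits. After the cancellations you are left with $\dot{\mathcal V}\le \tilde y^\top K\dot y-\rho(y)^\top\dot y$, where $K=\tfrac12(\nabla\phi(y^*)-\nabla\phi(y^*)^\top)$ and $\rho(y)=\phi(y)-\phi(y^*)-\nabla\phi(y^*)\tilde y$. Neither term is the time derivative of a function of $y$, and neither has a sign. The ODP inequalities provide no strictly negative term that could dominate them.

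Both of your escape routes are closed:
\begin{itemize}
\item For the actual loss map $\phi_i=\big[\sum_j V_iV_jG_{ij}\cos\theta_{ij},\ \sum_j V_jG_{ij}\sin\theta_{ij}\big]^\top$, the $(\theta_i,V_k)$ and $(V_k,\theta_i)$ entries of $\nabla\phi$ are $V_iG_{ik}\cos\theta_{ik}$ and $-V_iG_{ik}\cos\theta_{ik}$. So $K\neq 0$ as soon as any line has conductance. $K$ also cannot be absorbed into $W_B$, because every gradient field has a symmetric Jacobian.
\item $\tilde y^\top K\dot y$ is a bilinear form in deviations and rates, not the dissipated line power. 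Nonnegativity of losses therefore says nothing about its sign.
\end{itemize}

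The hypotheses you actually use (the ODP indices plus the symmetric part of the network Jacobian) cannot close the gap by themselves. Take two channels $\ddot q_i+d\dot q_i=u_i$, $y_i=q_i$. Each is ODP($-\epsilon$) with $S_i=\tfrac12\dot q_i^2+\tfrac{\epsilon}{2}q_i^2$, which has a strict minimum. Interconnect them by $-u=\nabla W_B(y)+\phi(y)$ with $W_B=\tfrac k2\|y\|^2$ and $\phi(y)=\kappa J_2y$, where $J_2=\left[\begin{smallmatrix}0&1\\-1&0\end{smallmatrix}\right]$. Then $\sigma_{net}=-k<-\epsilon$ for $\epsilon<k$. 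Yet the closed loop $\ddot q+d\dot q+(kI+\kappa J_2)q=0$ has characteristic factors $\lambda^2+d\lambda+k\pm i\kappa$ and is unstable once $\kappa^2>d^2k$.

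So any complete proof must exploit specific structure of the power-flow loss map, or add a hypothesis on the losses (e.g.\ a bound on $r/x$, or a strict dissipation margin in the ODP inequality). Your proposal supplies neither.

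Two smaller points:
\begin{itemize}
\item Even if $\nabla\phi(y^*)$ were symmetric, discarding $\rho$ only analyzes the linearized loop. Non-asymptotic Lyapunov stability is not inherited from a linearization in the critical case.
\item The definition only requires $S_i$ to have a local, not strict, minimum. Without the strictness assumption you allude to, $\mathcal V$ is not positive definite in state directions invisible to $y_i$.
\end{itemize}
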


In the remainder of this article, we propose a data-driven method relying exclusively on measured input-state trajectories to compute the ODP index $\sigma_i$ without knowing the bus dynamics $f$. We assume that the output function $h$ is known or, instead, the output trajectories are directly measurable.

\section{LTI System Formulations}

This section formulates the continuous-time and discrete-time linear time-invariant (LTI) systems, establishes their respective ODP conditions, and analyzes the theoretical approximation error introduced by the sampling process.

\subsection{ODP for Continuous-Time LTI System}
Small-signal stability concerns the local dynamic behavior of a system subject to minor perturbations around its steady-state equilibrium. Based on this physical premise, the nonlinear dynamics in (1) can be reasonably approximated via local linearization, yielding the following continuous-time LTI system:
\begin{equation}
\begin{cases}
\dot{x} = \bar{A}x + \bar{B} u \\
y = Cx
\end{cases},
\end{equation}
where $x \in \mathbb{R}^n$, $u \in \mathbb{R}^m$, and $y \in \mathbb{R}^m$ denote the state, input, and output vectors, respectively. The system matrices are given by $\bar{A} \in \mathbb{R}^{n \times n}$, $\bar{B} \in \mathbb{R}^{n \times m}$, and $C \in \mathbb{R}^{m \times n}$.

In practical multi-input multi-output (MIMO) systems, the ODP margin often exhibits an uneven distribution across different dimensions. To accurately characterize this multidimensional property, we extend the $\text{ODP}(\sigma)$ condition introduced in Section II to a generalized matrix form, which is formally defined as follows:

\begin{definition}[Generalized Matrix ODP] 
A continuous-time system is said to satisfy the $\text{ODP}(\Sigma)$ property if there exists a continuously differentiable storage function $S(x)$ and a symmetric matrix $\Sigma \in \mathbb{R}^{m \times m}$ such that the following dissipation inequality holds:
\begin{equation}
    \dot{S} \le u^\top \dot{y} - y^\top \Sigma \dot{y}. \label{eq:matrix_odp_def}
\end{equation}
\end{definition}
Meanwhile, to ensure compatibility with the system-wide stability verification in Lemma 1 and provide a unified metric, the equivalent scalar ODP index is extracted via the relationship $\Sigma \succeq \sigma I$. Specifically, this scalar is determined by the minimum eigenvalue of the symmetric matrix $\Sigma$ as:
\begin{equation}
    \sigma = \lambda_{\min}(\Sigma). \label{eq:sigma_extraction}
\end{equation}

Based on this generalized matrix formulation, the following lemma establishes the ODP criterion for continuous-time linear time-invariant (LTI) systems.

\begin{lemma}
The continuous-time LTI system (5) satisfies the ODP($\Sigma$) property if and only if one can find a symmetric positive-definite matrix $P \succ 0$ satisfying the following linear matrix inequality:
\begin{equation}
\bar{M}_1 - \frac{1}{2}(\bar{M}_2 + \bar{M}_2^\top) \preceq 0,
\end{equation}
where the corresponding block matrices are defined as:
\begin{equation}
\bar{M}_1 := \begin{bmatrix} \bar{A}^\top P + P\bar{A} & P\bar{B} \\ * & 0 \end{bmatrix} \\,
\end{equation}
\begin{equation}
\bar{M}_2 := \begin{bmatrix} -C^\top \Sigma C\bar{A} & -C^\top \Sigma C\bar{B} \\ C\bar{A} & C\bar{B} \end{bmatrix}.
\end{equation}
\end{lemma}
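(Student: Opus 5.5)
We take a quadratic storage function $S(x)=x^\top P x$ with $P\succ 0$ and write both sides of the dissipation inequality (6) as quadratic forms in the stacked vector $\xi:=[x^\top,\,u^\top]^\top\in\mathbb{R}^{n+m}$. The LTI dynamics (5) give
\[
\dot S = x^\top(\bar A^\top P+P\bar A)x+2x^\top P\bar B u=\xi^\top \bar M_1\xi ,
\]
since the off-diagonal block $P\bar B$ appears once in each of the $(1,2)$ and $(2,1)$ positions. The output derivative is $\dot y=C\bar A x+C\bar B u$, so the supply rate becomes
\[
u^\top\dot y-y^\top\Sigma\dot y = u^\top(C\bar A x+C\bar B u)-x^\top C^\top\Sigma C(\bar A x+\bar B u).
\]
Each term is a bilinear form $\xi^\top(\cdot)\xi$. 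With $\bar M_2$ as in (9), the first block row $[-C^\top\Sigma C\bar A,\ -C^\top\Sigma C\bar B]$ pairs with $x^\top$ on the left, and the second block row $[C\bar A,\ C\bar B]$ pairs with $u^\top$. Hence the supply rate equals $\xi^\top\bar M_2\xi=\xi^\top\tfrac12(\bar M_2+\bar M_2^\top)\xi$, because a quadratic form only sees the symmetric part of its matrix.

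\emph{Sufficiency.} Suppose the LMI (7) holds for some $P\succ0$. Then for every $\xi$,
\[
\dot S-(u^\top\dot y-y^\top\Sigma\dot y)=\xi^\top\Big(\bar M_1-\tfrac12(\bar M_2+\bar M_2^\top)\Big)\xi\le 0 .
\]
Along any trajectory, $(x(t),u(t))$ is an arbitrary point of $\mathbb{R}^{n+m}$, so (6) holds with $S=x^\top Px$. This $S$ is $C^1$ and has its minimum at $x=0$, so the system is ODP($\Sigma$).

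\emph{Necessity.} This is the main obstacle, because Definition 2 allows a general $C^1$ storage function. Our plan has two steps.

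First, we argue that the inequality must hold pointwise for every pair $(x,u)$. The reason is that $\dot S=\nabla S(x)^\top(\bar Ax+\bar Bu)$ depends only on the instantaneous pair $(x,u)$, and any such pair is realized at some time by a suitable input.

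Second, we reduce to a quadratic storage function, adopting the standard linear-systems convention that the storage function is quadratic (as in the KYP lemma). To justify this we use the usual homogeneity argument. The supply rate $w(x,u)$ is quadratic. If $S$ satisfies $\nabla S(x)^\top(\bar Ax+\bar Bu)\le w(x,u)$ with a minimum at the origin, then the scaled functions $S_\epsilon(x)=S(\epsilon x)/\epsilon^2$ satisfy the same inequality. Taking $\epsilon\to0$ and using $\nabla S(0)=0$ (from the local minimum) together with a second-order Taylor expansion, we obtain the quadratic form $x^\top Px$ with $P=\tfrac12\nabla^2S(0)\succeq0$, and this satisfies the same inequality. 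This step requires $S$ to be twice differentiable at $0$; we will state that smoothness as part of the convention.

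Once a quadratic storage function is available, the pointwise quadratic-form inequality is equivalent to the matrix inequality (7), since $\xi$ ranges over all of $\mathbb{R}^{n+m}$. The remaining issue is positive definiteness rather than semidefiniteness. We will handle it either by taking $P\succ0$ as part of the storage-function convention, or by noting that a strict minimum at the origin forces $P\succ0$.
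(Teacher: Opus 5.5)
Your two quadratic-form identities and your sufficiency argument are exactly the paper's proof. The paper never leaves the class $S(x)=x^\top Px$ with $P\succ0$. It tacitly reads ``ODP($\Sigma$)'' for the continuous-time LTI system as ``ODP($\Sigma$) witnessed by such a quadratic storage function''. Its ``only if'' is therefore just the fact that $\xi^\top Q\xi\le0$ for all $\xi\in\mathbb{R}^{n+m}$ iff the symmetric part of $Q$ is $\preceq0$.

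The gap is in your attempt to go beyond this. Granted the extra twice-differentiability you add, the homogeneity argument is sound, but it only delivers $P=\tfrac12\nabla^2S(0)\succeq0$. Your proposed upgrade to $P\succ0$ via a strict minimum is false: $S(x)=\|x\|^4$ has a strict minimum at the origin and zero Hessian there. Nor can you perturb $P$ to $P+\epsilon I$. That adds $\epsilon\begin{bmatrix}\bar A^\top+\bar A & \bar B\\ \bar B^\top & 0\end{bmatrix}$ to the left-hand side of the LMI, and this matrix is indefinite whenever $\bar B\neq0$ because of its zero $(2,2)$ block.

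In fact no argument can close this step. With storage functions as the ODP definitions allow them (any $C^1$ function with a local minimum at $0$), the ``only if'' is false. Take the $P$--$\theta$ droop channel of the CD inverter from the case study, $\dot x=-x/\tau_1+(D_1/\tau_1)u$, $y=x$, with $\sigma=1/D_1$. Then $u\dot y-\sigma y\dot y=(D_1/\tau_1)(u-x/D_1)^2\ge0$, so $S\equiv0$ certifies ODP($\sigma$). Indeed, the dissipation inequality forces $S'(x)=(1/D_1-\sigma)x$, so constants are the only storage functions here. Yet for $P=p>0$ the $2\times2$ matrix in the LMI has determinant $-D_1^2p^2/\tau_1^2<0$, so no $P\succ0$ works.

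The lemma is therefore correct only under the convention the paper uses implicitly here and states explicitly in its discrete-time ODP definition: the storage function is $x^\top Px$ with $P\succ0$. Adopt that convention at the outset (your first fallback) and drop the homogeneity and strict-minimum discussion. Your proof then coincides with the paper's.
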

\textit{Proof.} Consider the storage function $S(x) = x^\top Px$ and define the augmented vector $z = [x^\top, u^\top]^\top$. Evaluating the time derivative yields:
\begin{align*}
\dot{S}(x) &= x^\top (\bar{A}^\top P + P\bar{A}) x + 2x^\top P \bar{B} u = z^\top \bar{M}_1 z
\end{align*}

Substituting $y = Cx$ and $\dot{y} = C\bar{A}x + C\bar{B}u$ into the right-hand side of the ODP dissipation inequality gives:
\begin{align*}
&u^\top \dot{y} - y^\top \Sigma \dot{y} \\
&= u^\top (C\bar{A}x + C\bar{B}u) - x^\top C^\top \Sigma (C\bar{A}x + C\bar{B}u) \\
&= z^\top \begin{bmatrix} -C^\top \Sigma C\bar{A} & -C^\top \Sigma C\bar{B} \\ C\bar{A} & C\bar{B} \end{bmatrix} z = z^\top \bar{M}_2 z
\end{align*}

The ODP condition requires $\dot{S}(x) \le u^\top \dot{y} - y^\top \Sigma \dot{y}$, which equates to $z^\top (\bar{M}_1 - \bar{M}_2) z \le 0$ for any arbitrary $z$. Enforcing symmetry for the central matrix directly yields the LMI condition $\bar{M}_1 - \frac{1}{2}(\bar{M}_2 + \bar{M}_2^\top) \leq 0$. \hfill $\square$

\subsection{ODP for Discrete-Time LTI System}
Data-driven verification frameworks inherently operate on sampled measurement sequences rather than continuous physical trajectories. Aligning with this data acquisition reality, the continuous-time system (5) is discretized, yielding the following discrete-time linear time-invariant (dt-LTI) model:
\begin{equation}
\begin{cases}
x_{k+1} = Ax_k + Bu_k \\
y_k = Cx_k
\end{cases},
\end{equation}
where $x_k \in \mathbb{R}^n$, $u_k \in \mathbb{R}^m$, and $y_k \in \mathbb{R}^m$ denote the state, input, and output vectors at the $k$-th sampling instant, respectively. The system matrices $(A, B)$ are related to the continuous-time matrices $(\bar{A}, \bar{B})$ via the step-invariant transformation with sampling interval $h > 0$:
\begin{equation}
A = e^{\bar{A}h}, \quad B = \int_{0}^{h} e^{\bar{A}(h-\tau)} \bar{B} d\tau,
\end{equation}
where $\exp(\bar{A}h)$ is the matrix exponential and is calculated as
\begin{equation}
    \exp(\bar{A}h) = \sum_{k=0}^{\infty} \frac{1}{k!} (\bar{A}h)^k.
\end{equation}

To preserve the physical characteristics of Output Differential Passivity in the discrete domain, a finite difference approximation is employed. By defining $\Delta y_k := y_{k+1} - y_k$ as the discrete-time counterpart of the output derivative, the ODP criterion is formulated as follows:

\begin{definition}
The dt-LTI system (11) is said to satisfy ODP($\Sigma$) if there exists a positive-definite matrix $P \succ 0$ such that the storage function $S(x_k) = x_k^\top P x_k$ satisfies the following dissipation inequality:
\begin{equation}
S(x_{k+1}) - S(x_k) \le u_k^\top \Delta y_k - y_k^\top \Sigma \Delta y_k.
\end{equation}
\end{definition}
Based on this quadratic storage function, the dissipation inequality in (14) is recast into a tractable linear matrix inequality (LMI) problem, as established in the following theorem.

\begin{theorem}
The dt-LTI system (11) satisfies ODP($\Sigma$) if and only if there exists a positive-definite matrix $P \succ 0$ such that the following linear matrix inequality holds:
\begin{equation}
M_{1} - \frac{1}{2}(M_{2} + M_{2}^\top) \preceq 0,
\end{equation}
where the corresponding block matrices are defined as:
\begin{equation}
M_{1} := \begin{bmatrix} A^\top P A - P & A^\top P B \\ * & B^\top P B \end{bmatrix} \\,
\end{equation}
\begin{equation}
M_{2} := \begin{bmatrix} -C^\top \Sigma C(A - I) & -C^\top \Sigma CB \\ C(A - I) & CB \end{bmatrix}.
\end{equation}
\end{theorem}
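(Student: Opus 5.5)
The plan mirrors the proof of Lemma 2, with the time derivative replaced by a one-step difference. Fix the quadratic storage function $S(x_k)=x_k^\top P x_k$ with $P\succ 0$, as required by Definition 3. Introduce the augmented vector $z_k := [x_k^\top, u_k^\top]^\top \in \mathbb{R}^{n+m}$. We will express both sides of the dissipation inequality (14) as quadratic forms in $z_k$, and then argue that (14) holding along all trajectories is the same as a matrix inequality.

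\textbf{Left-hand side.} Substitute $x_{k+1}=Ax_k+Bu_k$ into $S(x_{k+1})-S(x_k)$. Expanding gives
\[
S(x_{k+1})-S(x_k) = x_k^\top(A^\top PA-P)x_k + 2x_k^\top A^\top PB u_k + u_k^\top B^\top PB u_k = z_k^\top M_1 z_k,
\]
with $M_1$ as in (16); this matrix is already symmetric.

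\textbf{Right-hand side.} Use $y_k=Cx_k$ and compute the output increment
\[
\Delta y_k = Cx_{k+1}-Cx_k = C(A-I)x_k + CBu_k .
\]
Substituting, $u_k^\top\Delta y_k - x_k^\top C^\top\Sigma\,\Delta y_k$ becomes $z_k^\top M_2 z_k$. In block form, the first block row of $M_2$ is $-x_k^\top C^\top\Sigma[C(A-I),\ CB]$ and the second is $u_k^\top[C(A-I),\ CB]$, which reproduces (17).

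\textbf{Equivalence.} Inequality (14) is then equivalent to $z_k^\top(M_1-M_2)z_k\le 0$. Since a quadratic form depends only on the symmetric part of its matrix, this is the same as $z_k^\top\big(M_1-\tfrac12(M_2+M_2^\top)\big)z_k\le 0$.

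The one step needing care is the ``only if'' direction. There we must pass from (14) holding along trajectories to (15) holding as a matrix inequality on all of $\mathbb{R}^{n+m}$. I would handle this by reading Definition 3 as requiring (14) for every state $x_k$ and every input $u_k$. This is legitimate because the one-step map places no constraint on the pair $(x_k,u_k)$: any initial state combined with any input is an admissible trajectory. Hence $z_k$ ranges over all of $\mathbb{R}^{n+m}$. A quadratic form that is nonpositive on the whole space has a negative semidefinite symmetric part, which gives (15). The converse is immediate: if (15) holds, then (14) holds for every $z_k$, and in particular along every trajectory.
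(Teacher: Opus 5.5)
Your proposal is correct and follows essentially the same route as the paper. Both use the quadratic storage $S(x_k)=x_k^\top P x_k$ and the augmented vector $z_k$, write the two sides of (14) as $z_k^\top M_1 z_k$ and $z_k^\top M_2 z_k$, and pass to the symmetric part to get (15). Your explicit argument that $z_k$ ranges over all of $\mathbb{R}^{n+m}$, which is what the ``only if'' direction needs, only spells out what the paper asserts with ``for any arbitrary state-input pair $z_k$''.
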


\textit{Proof.} Consider the quadratic storage function $S(x_k) = x_k^\top P x_k$ and define the augmented vector $z_k = [x_k^\top, u_k^\top]^\top$. The variation of the storage function along the discrete-time trajectories is evaluated as:
\begin{align*}
\Delta S_k &= (Ax_k + Bu_k)^\top P (Ax_k + Bu_k) - x_k^\top P x_k \\
&= z_k^\top \begin{bmatrix} A^\top P A - P & A^\top P B \\ B^\top P A & B^\top P B \end{bmatrix} z_k = z_k^\top M_{1} z_k
\end{align*}

Noting that $\Delta y_k = C(A-I)x_k + CBu_k$, substituting it into the supply rate directly yields:
\begin{align*}
u_k^\top \Delta y_k - y_k^\top \Sigma \Delta y_k &= z_k^\top \begin{bmatrix} -C^\top \Sigma C(A - I) & -C^\top \Sigma CB \\ C(A - I) & CB \end{bmatrix} z_k \\
&= z_k^\top M_{2} z_k
\end{align*}

The discrete-time ODP dissipation inequality requires $\Delta S_k \le u_k^\top \Delta y_k - y_k^\top \Sigma \Delta y_k$, which is equivalent to $z_k^\top (M_{1} - M_{2}) z_k \preceq 0$ for any arbitrary state-input pair $z_k$. Enforcing the symmetry of the central matrix yields the LMI condition (15).

\subsection{The Error of Sampling}

Here, we aim to show the relation between the ct-ODP and the dt-ODP. We aim to answer whether we can use the sampled dt-LTI system to infer ODP index of the actual ct-LTI system.

We approximate the matrix exponential by its first order term. It follows from (12) that
\begin{equation}
    A = I + h\bar{A} + O(h^2), \quad B = h\bar{B} + O(h^2), 
\end{equation}
where $O(h^2)$ is the error matrix of order $h^2$. Substituting (18) into (16) and (17), we obtain
\begin{equation}
    M_1 = h\bar{M}_1 + O(h^2), \quad M_2 = h\bar{M}_2 + O(h^2) .
\end{equation}
Notice that $h > 0$. We have
\begin{equation*}
    M_1 - \frac{1}{2}(M_2 + M_2^\top) + O(h^2) \preceq 0 \iff \bar{M}_1 - \frac{1}{2}(\bar{M}_2 + \bar{M}_2^\top) \preceq 0
\end{equation*}
this indicates that the approximation error from sampling is in the order of $h^2$, if we use the LMI (15) to verify ODP($\Sigma$) of the actual ct-LTI system. Note that $O(h^2)$ is generally an indefinite matrix. Hence, the approximation error could be over-estimated or under-estimated. Nevertheless, as $h \to 0$, the approximation will be exact. Practically, we suggest the sampling step size should satisfy $h^2 \ll \lVert M_1 - \frac{1}{2}(M_2 + M_2^\top) \rVert$ to obtain an accurate result.

\section{DATA-DRIVEN ODP CERTIFICATION FROM INPUT-STATE TRAJECTORIES}

This section first establishes the fundamental rank condition for data informativity, then derives a data-driven LMI verification criterion, and formulates an optimization problem to extract the optimal ODP index from measured trajectories.

\subsection{Data Collection and Rank Condition}

Our objective is to verify the discrete-time ODP property without explicitly identifying the system matrices $(A, B)$. Based on the measured trajectories of the system, we collect a sequence of inputs $\{u_k\}_{k=0}^{N-1}$  and corresponding states $\{x_k\}_{k=0}^N$. We then construct the following data matrices:
\begin{align*}
X &:= \begin{bmatrix} x_0 & x_1 & \cdots & x_{N-1} \end{bmatrix}, \\
X_+ &:= \begin{bmatrix} x_1 & x_2 & \cdots & x_N \end{bmatrix}, \\
U &:= \begin{bmatrix} u_0 & u_1 & \cdots & u_{N-1} \end{bmatrix}.
\end{align*}

The proposed data-driven framework relies entirely on the measured input-state trajectories\cite{koch2020verifying}. A fundamental prerequisite for this approach is that the collected data must be sufficiently informative to characterize the underlying system dynamics. This requirement, which plays a pivotal role in the subsequent verification, is strictly formalized by the following full row rank condition\cite{willems2005note}:
\begin{equation}
\text{rank} \begin{bmatrix} X \\ U \end{bmatrix} = n + m.
\end{equation}

This fundamental rank condition can be guaranteed by requiring the input sequence to be sufficiently persistently exciting. Given the finite input sequence $\{u_k\}_{k=0}^{N-1}$, we construct the corresponding Hankel matrix of depth $L$ as follows:
\begin{equation}
H_L(u) := \begin{bmatrix} 
u_0 & u_1 & \cdots & u_{N-L} \\ 
u_1 & u_2 & \cdots & u_{N-L+1} \\ 
\vdots & \vdots & \ddots & \vdots \\ 
u_{L-1} & u_L & \cdots & u_{N-1} 
\end{bmatrix}.
\end{equation}

\begin{definition}[Persistent Excitation\cite{koch2020verifying}]
We say that a sequence $\{u_k\}_{k=0}^{N-1}$ with $u_k \in \mathbb{R}^m$ is persistently exciting of order $L$, if $\text{rank}(H_L(u)) = mL$.
\end{definition}

According to the fundamental lemma, a sufficient condition to guarantee the rank condition in (20) is that the input trajectory $u_k$ is persistently exciting of order $n + 1$ \cite{9551767}.

\subsection{Data-Driven ODP Verification Criterion}

Exploiting discrete-time input-state trajectories completely bypasses explicit physical models. Provided the data satisfies the informativity rank condition, the following theorem establishes a data-driven LMI criterion for ODP verification.

\begin{theorem}
Given input and state trajectories $\{u_k\}_{k=0}^{N-1}$, $\{x_k\}_{k=0}^N$ of the dt-LTI system (11), if the rank condition (20) holds and there exists a $P = P^\top \succ 0$ such that
\end{theorem}
\begin{equation}
X_+^\top P X_+ - X^\top P X - \frac{1}{2}(M_3 + M_3^\top) \preceq 0, 
\end{equation}
where
\begin{equation}
M_3 := [U - \Sigma CX]^\top [CX_+ - CX] ,
\end{equation}
then system (11) is ODP($\Sigma$).

\textit{Proof.} It follows from $X_+ = AX + BU$ that
\begin{equation*}
X_+^\top P X_+ - X^\top P X = \begin{bmatrix} X \\ U \end{bmatrix}^\top \begin{bmatrix} A^\top P A - P & A^\top P B \\ * & B^\top P B \end{bmatrix} \begin{bmatrix} X \\ U \end{bmatrix}
\end{equation*}
and
\begin{equation*}
M_3 = \begin{bmatrix} X \\ U \end{bmatrix}^\top \begin{bmatrix} -C^\top \Sigma C(A - I) & -C^\top \Sigma CB \\ C(A - I) & CB \end{bmatrix} \begin{bmatrix} X \\ U \end{bmatrix}
\end{equation*}
Hence, (22) implies that
\begin{equation}
\begin{bmatrix} X \\ U \end{bmatrix}^\top \left[ M_1 - \frac{1}{2}(M_2 + M_2^\top) \right] \begin{bmatrix} X \\ U \end{bmatrix} \preceq 0 
\end{equation}
By the rank condition (20), LMI (22) is equivalent to (15). It then follows from Theorem 2 that system (11) is ODP($\Sigma$). \hfill $\square$

Our method can also apply to the case that $C$ is unknown but measurements of $\{y_k\}_{k=0}^N$ are available, by simply replacing $CX$ and $CX_+$ with $Y$ and $Y_+$, respectively.

To extract the optimal ODP index matrix $\Sigma$ and maximize the operational margin, we formulate the following convex semi-definite programming (SDP) problem:
\begin{equation}
\begin{aligned}
    \max_{\Sigma,P} \quad & \text{tr}(\Sigma) \\
    \text{s.t.} \quad & X_+^\top P X_+ - X^\top P X - \frac{1}{2}(M_3 + M_3^\top) \preceq 0 \\
    & P \succ 0
\end{aligned} 
\end{equation}
where $\text{tr}(\Sigma)$ denotes the matrix trace. Solving (25) efficiently provides the rigorous theoretical boundary for distributed stability certification.

Solving the SDP yields the optimal matrix $\Sigma^*$. Its equivalent scalar ODP index, extracted as (7), quantifies the local bus's conservative damping margin and is directly applied in Lemma 1 for system-wide stability verification.

\begin{remark} \label{rem:discrete_continuous_equivalence}
While the physical grid operates in continuous time, the proposed certification relies entirely on discrete-time sampled data. As demonstrated in Section III-C, when the sampling step $h$ is sufficiently small, the $O(h^2)$ approximation error becomes negligible.
\end{remark}

\section{CASE STUDY}
In this section, the proposed data-driven ODP characterization and its application to stability assessment are validated through two case studies. Case I identifies the ODP index of a single conventional droop-controlled inverter unit using offline measurements, while Case II extends the analysis to a 3-bus system to demonstrate the stability evaluation.

\subsection{Offline Measurement of a Single CD Inverter}

To validate the proposed method, we consider an offline measurement scenario for a single grid-connected conventional droop-controlled (CD) inverter. The internal dynamics of the CD inverter are governed by the conventional $P-\theta$ and $Q-V$ droop control equations\cite{6987381}:
\begin{equation}
\begin{cases}
    \tau_1 \dot{\theta} = -(\theta - \theta^*) - D_1(P - P^*) \\
    \tau_2 \dot{V} = -(V - V^*) - D_2(Q - Q^*)
\end{cases},
\end{equation}
where $\tau_1, \tau_2$ are the time constants, and $D_1, D_2$ are the droop gains. The specific physical and control parameters utilized in this offline procedure are detailed in Table \ref{tab:cd_params}.

\begin{table}[htbp]
\vspace{-2mm} 
\caption{Parameters of the CD Inverter}
\label{tab:cd_params}
\centering
\renewcommand{\arraystretch}{1.2} 
\setlength{\tabcolsep}{10mm} 
\begin{tabular}{l | l}
\hline
\textbf{Parameters} & \textbf{Values} \\
\hline
$\tau_1, \tau_2$ / s & 1.0, 10.0 \\
$D_1, D_2$ & 0.3, 0.1 \\
$P^*, Q^*, V^*$ / p.u. & 0.5, 0.0, 1.0 \\
$\theta^*$ / rad & 0 \\
\hline
\end{tabular}
\end{table}

To establish a rigorous baseline for verifying the data-driven method, we introduce the analytical ODP bound for the CD inverter. According to the model-based stability conditions derived in \cite{yang2019distributed}, the theoretical maximum ODP index $\sigma_{th}$ is inherently bounded by the droop gains and the steady-state operating point ($V^*, Q^*$):
\begin{equation}
    \sigma_{th} = \min \left\{ \frac{1}{D_1}, \frac{V^* / D_2 + Q^*}{{V^*}^2} \right\} \label{eq:cd_theoretical}.
\end{equation}
Substituting the parameters from Table \ref{tab:cd_params} into \eqref{eq:cd_theoretical} yields the theoretical baseline $\sigma_{th} = 3.33$.

Under $1\%$ Gaussian noise, trajectories are sampled at $h = 0.4$\,s over a 200\,s window with a multi-frequency sinusoidal input. To suppress noise and satisfy the rank condition, the recorded data is averaged across multiple periods and truncated to a 1.6\,s window. 

Given the magnitude disparity between $\tau_1$ and $\tau_2$, a simple data normalization is performed to avoid ill-conditioned matrices. Solving the SDP (25) yields the optimal index matrix $\Sigma_{cd}^*$. By extracting its scalar index, we obtain $\sigma_{cd} = 3.38$. Compared to the theoretical bound of $\sigma_{th} = 3.33$, the data-driven result exhibits a marginal overestimation of $1.42\%$. 

This slight deviation is primarily caused by finite data truncation and measurement noise, which is completely acceptable in practical applications. Achieving this result without explicit physical models verifies the effectiveness of the proposed framework, providing a solid foundation for the system-wide online assessment in Section V-B.

\subsection{Online Assessment of a 3-Bus System}

To validate the proposed data-driven ODP criteria in a network setting, a 3-bus microgrid is established. Bus 1 is connected with a Synchronous Generator. Bus 2 is attached to a quadratic droop-controlled inverter, and Bus 3 is connected with a conventional droop-controlled inverter as a load. The detailed differential equations governing the internal dynamics of these three devices can be found in \cite{yang2019distributed}.

The physical parameters of the transmission lines and the devices are detailed in Table \ref{tab:system_params}.

\begin{table}[htbp]
\caption{System Parameters of the 3-Bus Microgrid}
\label{tab:system_params}
\centering
\renewcommand{\arraystretch}{1.2} 
\setlength{\tabcolsep}{3.5mm} 
\begin{tabular}{l | l}
\hline
\textbf{Parameters} & \textbf{Values (p.u.)} \\
\hline
Transmission Lines $x, r$ & 0.12, 0.01 \\
SG: $M_i, D_i, T'_{di}, x_{di}, x'_{di}$ & 0.16, 0.076, 6.56, 0.295, 0.17 \\
QD parameters $\tau_1, \tau_2$ / s & 0.3, 8.0 \\
CD parameters $\tau_1, \tau_2$ / s & 1.0, 10.0 \\
\hline
\end{tabular}
\end{table}

To simulate a stressed and challenging operating condition, a heavy-load scenario is configured. The base power profile is set with active power $P_2 = 2.0$ at Bus 2, and a load demand of $P_3 = -3, Q_3 = -0.2$ at Bus 3.

To evaluate the baseline stability, the inherent network ODP requirement, $\sigma_{net}$, is first calculated. We then continuously scale down the theoretical control parameter $\sigma$ around the target threshold $\sigma_{net}$ (specifically, varying from $\sigma_{net} + 0.3$ to $\sigma_{net} - 0.2$). During this shift, the internal control gains are updated according to the mapping theoretical conditions to strictly maintain the steady-state equilibrium point. 

For each configured $\sigma$, we collect the transient operational data at a sampling step of $h=1$ms and employ the LMI-based algorithm to compute the optimal matrix $\Sigma_i$ for each device. the absolute physical stability is evaluated by calculating the maximum dynamic real part $\max(\text{Re}(\lambda))$ of the system-wide eigenvalues (Fig. \ref{fig:verification}(A)). Simultaneously, The equivalent scalar ODP index is then extracted as (7) (Fig. 1(B)).

\begin{figure}[htbp]
    \centering
    \hspace*{-1mm}\includegraphics[width=0.8\linewidth]{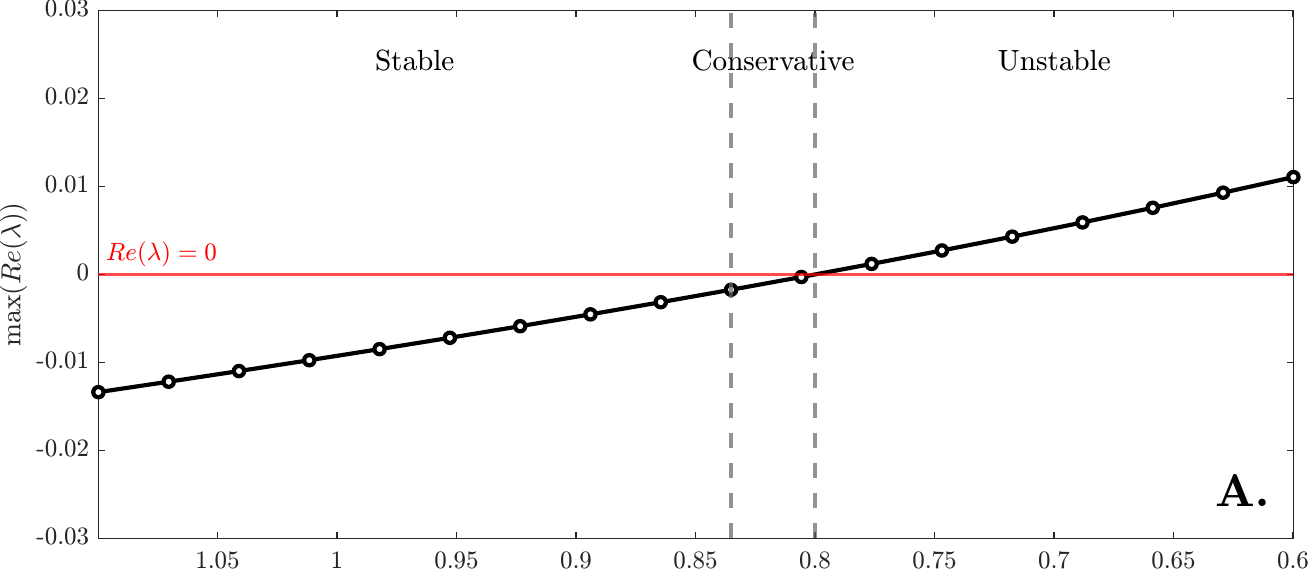}\\[0.4cm]
    \includegraphics[width=0.8\linewidth]{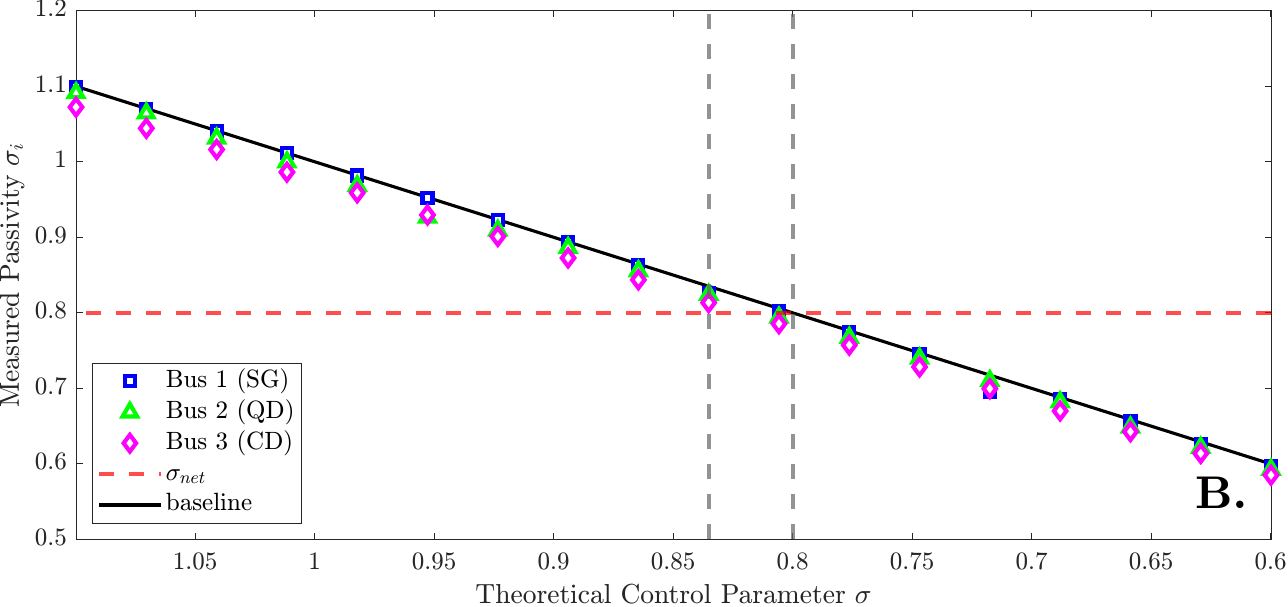}
    \caption{Distributed ODP verification and small-signal stability analysis based on data-driven methods. (A) Trajectory of the maximum dynamic eigenvalue real part. (B) Scatter points represent the data-driven measured ODP indices, and the solid line indicates the theoretical $\sigma$. The two vertical dashed lines bound the narrow conservative region.}
    \label{fig:verification}
\end{figure}

It is clear that the data-driven $\sigma$ values obtained from the devices are in close agreement with the theoretical values. Specifically, when the measured indices satisfy $\min(\sigma_i) > \sigma_{net}$, the system-wide stability is guaranteed. This corresponds to the ``Stable'' region, where all physical eigenvalues possess negative real parts. It should be noted that as $\sigma$ further decreases, the data-driven condition is violated slightly before the physical system actually loses stability. This discrepancy creates a narrow ``Conservative'' region bounded by two vertical dashed lines. As $\sigma$ continues to decrease beyond this region, the physical system eventually becomes unstable. 

Finally, it is important to emphasize that due to the severe ill-conditioning of the data matrices, the proposed method occasionally yields anomalous values, with a failure probability of approximately 6\%.

\section{CONCLUSION}

In this paper, a data-driven distributed framework is proposed to guarantee interconnected power system stability. Bypassing explicit physical models, the output differential passivity of heterogeneous bus dynamics is quantified exclusively using input-state trajectories. Validated through single-inverter offline measurements and a stressed 3-bus online assessment, results confirm that system-wide stability is strictly guaranteed.

Notably, the criteria's theoretical conservativeness ensures absolute safety in stability assessment. Future studies will extend this scalable paradigm to larger networks and improve its numerical conditioning to eliminate anomalous calculations, further enhancing overall algorithmic accuracy.

\bibliographystyle{IEEEtran}
\bibliography{bib/references}

@article{hatziargyriou2020stability,
  title={Stability definitions and characterization of dynamic behavior in systems with high penetration of power electronic interfaced technologies},
  author={Hatziargyriou, Nikos and Milanovi{\'c}, Jovica and Rahmann, Claudia and Ajjarapu, Venkataramana and Canizares, Claudio and Erlich, Istvan and Hill, David and Hiskens, Ian and Kamwa, Innocent and Pal, Bikash and others},
  journal={IEEE PES Technical Report PES-TR77},
  year={2020},
  publisher={IEEE}
}

@inproceedings{milano2018foundations,
  title={Foundations and challenges of low-inertia systems},
  author={Milano, Federico and D{\"o}rfler, Florian and Hug, Gabriela and Hill, David J and Verbi{\v{c}}, Gregor},
  booktitle={2018 power systems computation conference (PSCC)},
  pages={1--25},
  year={2018},
  organization={IEEE}
}

@article{stott2005power,
  title={Power system dynamic response calculations},
  author={Stott, Brian},
  journal={Proceedings of the IEEE},
  volume={67},
  number={2},
  pages={219--241},
  year={2005},
  publisher={IEEE}
}

@article{sastry2003hierarchical,
  title={Hierarchical stability and alert state steering control of interconnected power systems},
  author={Sastry, Shankar and Varaiya, Pravin},
  journal={IEEE Transactions on Circuits and systems},
  volume={27},
  number={11},
  pages={1102--1112},
  year={2003},
  publisher={IEEE}
}

@article{chang1995direct,
  title={Direct stability analysis of electric power systems using energy functions: theory, applications, and perspective},
  author={Chang, Hsiao-Dong and Chu, Chia-Chi and Cauley, Gerry},
  journal={Proceedings of the IEEE},
  volume={83},
  number={11},
  pages={1497--1529},
  year={1995},
  publisher={IEEE}
}

@inproceedings{yang2019toward,
  title={Toward distributed stability analytics for power systems with heterogeneous bus dynamics},
  author={Yang, Peng and Liu, Feng and Wang, Zhaojian and Shen, Chen and Yi, Jun and Lin, Weifang},
  booktitle={2019 IEEE 58th Conference on Decision and Control (CDC)},
  pages={7518--7523},
  year={2019},
  organization={IEEE}
}

@article{song2017distributed,
  title={A distributed framework for stability evaluation and enhancement of inverter-based microgrids},
  author={Song, Yue and Hill, David J and Liu, Tao and Zheng, Yu},
  journal={IEEE Transactions on Smart Grid},
  volume={8},
  number={6},
  pages={3020--3034},
  year={2017},
  publisher={IEEE}
}

@article{yang2019distributed,
  title={Distributed stability conditions for power systems with heterogeneous nonlinear bus dynamics},
  author={Yang, Peng and Liu, Feng and Wang, Zhaojian and Shen, Chen},
  journal={IEEE Transactions on Power Systems},
  volume={35},
  number={3},
  pages={2313--2324},
  year={2019},
  publisher={IEEE}
}

@article{nandanoori2020distributed,
  title={Distributed small-signal stability conditions for inverter-based unbalanced microgrids},
  author={Nandanoori, Sai Pushpak and Kundu, Soumya and Du, Wei and Tuffner, Frank K and Schneider, Kevin P},
  journal={IEEE Transactions on Power Systems},
  volume={35},
  number={5},
  pages={3981--3990},
  year={2020},
  publisher={IEEE}
}

@inproceedings{men2023artificial,
  title={Artificial intelligence (AI) based black-box modeling algorithm for system identification in three-phase single-stage PV inverter systems},
  author={Men, Yuxi and Zhang, Junhui and Lu, Xiaonan and Hong, Tianqi},
  booktitle={2023 IEEE Energy Conversion Congress and Exposition (ECCE)},
  pages={27--32},
  year={2023},
  organization={IEEE}
}

@article{ruiz2019data,
  title={Data-driven control of LVDC network converters: Active load stabilization},
  author={Ruiz-Martinez, Omar F and Mayo-Maldonado, Jonathan C and Escobar, Gerardo and Frias-Araya, Benjamin A and Valdez-Resendiz, Jesus E and Rosas-Caro, Julio C and Rapisarda, Paolo},
  journal={IEEE Transactions on Smart Grid},
  volume={11},
  number={3},
  pages={2182--2194},
  year={2019},
  publisher={IEEE}
}

@article{willems2005note,
  title={A note on persistency of excitation},
  author={Willems, Jan C and Rapisarda, Paolo and Markovsky, Ivan and De Moor, Bart LM},
  journal={Systems \& Control Letters},
  volume={54},
  number={4},
  pages={325--329},
  year={2005},
  publisher={Elsevier}
}

@article{de2019formulas,
  title={Formulas for data-driven control: Stabilization, optimality, and robustness},
  author={De Persis, Claudio and Tesi, Pietro},
  journal={IEEE Transactions on Automatic Control},
  volume={65},
  number={3},
  pages={909--924},
  year={2019},
  publisher={IEEE}
}

@article{zhang2015data,
  title={Data-driven control for interlinked AC/DC microgrids via model-free adaptive control and dual-droop control},
  author={Zhang, Huaguang and Zhou, Jianguo and Sun, Qiuye and Guerrero, Josep M and Ma, Dazhong},
  journal={IEEE Transactions on Smart Grid},
  volume={8},
  number={2},
  pages={557--571},
  year={2015},
  publisher={IEEE}
}

@article{xu2021data,
  title={Data-driven inverter-based Volt/VAr control for partially observable distribution networks},
  author={Xu, Tong and Wu, Wenchuan and Hong, Yiwen and Yu, Junjie and Zhang, Fazhong},
  journal={CSEE Journal of Power and Energy Systems},
  volume={9},
  number={2},
  pages={548--560},
  year={2021},
  publisher={CSEE}
}

@ARTICLE{9551767,
  author={Koch, Anne and Berberich, Julian and Allgöwer, Frank},
  journal={IEEE Transactions on Automatic Control}, 
  title={Provably Robust Verification of Dissipativity Properties from Data}, 
  year={2022},
  volume={67},
  number={8},
  pages={4248-4255},
  doi={10.1109/TAC.2021.3116179}}

@inproceedings{koch2020verifying,
  title={Verifying dissipativity properties from noise-corrupted input-state data},
  author={Koch, Anne and Berberich, Julian and Allg{\"o}wer, Frank},
  booktitle={2020 59th IEEE Conference on Decision and Control (CDC)},
  pages={616--621},
  year={2020},
  organization={IEEE}
}

@ARTICLE{6987381,
  author={Zhang, Yun and Xie, Le},
  journal={IEEE Transactions on Power Systems}, 
  title={Online Dynamic Security Assessment of Microgrid Interconnections in Smart Distribution Systems}, 
  year={2015},
  volume={30},
  number={6},
  pages={3246-3254},
  doi={10.1109/TPWRS.2014.2374876}}

@article{liu2022stability,
  title={Stability and control of power grids},
  author={Liu, Tao and Song, Yue and Zhu, Lipeng and Hill, David J},
  journal={Annual review of control, robotics, and autonomous systems},
  volume={5},
  number={1},
  pages={689--716},
  year={2022},
  publisher={Annual Reviews}
}

@article{guerrero2012advanced,
  title={Advanced control architectures for intelligent microgrids—Part I: Decentralized and hierarchical control},
  author={Guerrero, Josep M and Chandorkar, Mukul and Lee, Tzung-Lin and Loh, Poh Chiang},
  journal={IEEE Transactions on Industrial Electronics},
  volume={60},
  number={4},
  pages={1254--1262},
  year={2012},
  publisher={IEEE}
}
\vspace{12pt}

\end{document}